\def\bh{{\bf h}}
\def\bI{{\bf I}}
\def\Exp{{\mathbb E}}
\def\b0{{\bf 0}}
\newtheorem{theorem}{Theorem}
\newtheorem{corollary}{Corollary}
\newenvironment{proof}{{\it Proof}\quad}{\hfill \par}
\begin{document}

\title{NOMA Design with Power-Outage Tradeoff for Two-User Systems}

\author{Zeyu Sun, Yindi Jing \IEEEmembership{Member, IEEE} and Xinwei Yu
\thanks{The authors are with the University of Alberta, Edmonton, Alberta, Canada T6G2R3.  (email: zsun8@ualberta.ca; yindi@ualberta.ca; xinwei2@ualberta.ca)}
}

\maketitle

\begin{abstract} 
This letter proposes a modified non-orthogonal multiple-access (NOMA) scheme for systems with a multi-antenna base station (BS) and two single-antenna users, where NOMA transmissions are conducted only when the absolute correlation coefficient (CC) between the user channels exceeds a threshold and the BS uses matched-filter (MF) precoding along the user with the stronger average channel gain.  We derive the average minimal transmit power to guarantee the signal-to-interference-plus-noise-ratio (SINR) levels of both users. Our results show that the average minimal power grows logarithmically in the reciprocal of the CC threshold and a non-zero threshold is necessary for the modified NOMA scheme to have finite average minimal transmit power. Further, for the massive MIMO scenario, we derive the scaling laws of the average transmit power and outage probability with respect to the antenna numbers, as well as their tradeoff law.  Simulation results are shown to validate our theoretical results.
\end{abstract}

\begin{IEEEkeywords}
Non-orthogonal multiple access (NOMA), minimal transmit power, channel correlation, power scaling law, outage probability. 
\end{IEEEkeywords}

\IEEEpeerreviewmaketitle

\section{Introduction}

\IEEEPARstart{N}{on-orthogonal} multiple-access (NOMA) is a potential technique for the next generation mobile communications. With superposition coding at the transmitting base station (BS) and successive interference cancellation (SIC) at the receiving users\cite{huang}, NOMA can provide improved spectral efficiency via the extra degree-of-freedom in the power domain compared to traditional orthogonal multiple-access (OMA) schemes \cite{saito}.

Early works on NOMA considered the single-input-single-output (SISO) case, where the BS is equipped with a single-antenna, e.g., \cite{ding1,chen,ding2}. For NOMA with multiple uniformly distributed users, in \cite{ding2}, expressions were derived for both the sum-rate and the outage probability of each individual user. It was shown that NOMA has higher sum-rate than OMA; while for the outage probability, the choices of user rates and power coefficients are critical. In \cite{ding1}, the sum-rate superiority of NOMA to OMA was shown for the two-user cluster case and fixed power allocation. The significance of user-pairing based on channel norm difference was also demonstrated. Further, the outage probability of the stronger user was analyzed given quality-of-service (QoS) guarantee of the weak user. In \cite{chen}, for multiple users, the sum-rate optimization over power allocation with fairness consideration were studied for both OMA and NOMA systems and the optimized sum-rate of NOMA was shown to be higher. 

There are also results on multi-input-multi-output (MIMO) NOMA, where the BS is equipped with multiple antennas, e.g., \cite{zeng1,zeng2,senel}. The work in \cite{zeng1} proved that for 2-user case, an upper bound of the sum-rate in OMA systems also serves as a lower bound of the sum-rate in NOMA systems when applying the same precoding and postcoding to OMA and NOMA. The work was generalized to multiple-user case in \cite{zeng2}. In \cite{senel}, for a massive MIMO scenario, the sum-rate achieved by NOMA with two-user clusters was studied and compared to the multi-user beamforming scheme. A hybrid of NOMA and multi-user beamforming scheme was also proposed for sum-rate improvement.

Different from existing works, we study the power consumption and the tradeoff between power consumption and outage probability for MIMO-NOMA systems with a two-user cluster and matched-filter (MF) precoding according to the channel of the stronger user. Considering the significance of channel correlation coefficient (CC) \cite{senel}, we propose a modified correlation-based NOMA (CB-NOMA) scheme, where NOMA is used only when the absolute CC of the user channels is above a threshold. A tight approximation is derived for the average minimal transmit power of CB-NOMA to guarantee QoS for both users. The result reveals the behaviour of the average minimum transmit power with respect to the correlation-threshold and shows the superiority of CB-NOMA\ to the original NOMA in power saving. Further, for the massive MIMO scenario, the scaling laws of the power consumption and outage probability are derived  to guide the threshold design.

\section{NOMA scheme with Correlation Threshold}
We consider the downlink transmissions in a multi-user system with an $M$-antenna BS and 2 single-antenna users. The channel vector for User $k$ can be modeled as:
    ${\bm g}_{k} =\sqrt{\beta_{k}} {\bm h}_{k}$, for $k=1,2,$
where $\beta_{k}$ is the large-scale fading coefficient and $\bm{h}_{k} \in \mathbb{C}^{M \times 1}$ represents the small-scale fading effect with independent and identically distributed (i.i.d.)~elements following the Rayleigh distribution, i.e., $\bm{h}_{k} \sim \mathcal{CN}(\bm{0},\bm{I}_M)$ for $k=1,2$. 
Let $\rho$ be the absolute CC between $\bm{h}_1$ and $\bm{h}_2$, which is given by:
\begin{equation}
    \rho\triangleq\frac{\left| \bm{h}_1^H \bm{h}_2 \right|}{\lVert \bm{h}_1 \rVert \lVert \bm{h}_2 \rVert}.
\end{equation}
In other words, $\rho=|\cos\theta|$, where $\theta$ is the angle between the two channel vectors. We assume perfect CSI at the transmitter.  Without loss of generality, we assume that $\beta_1\geq\beta_2$. 

To save transmit power and achieve a balance between transmit power and the QoS performance, we propose a modified CB-NOMA scheme in which the BS transmits to both users with a common time-frequency resource block as well as common beamformer only when $\rho$, the absolute correlation coefficient, exceeds a threshold $\rho_{th}$. Otherwise the BS keeps silent to save power.
The design of $\rho_{th}$ and its effects on  the performance will be analyzed in later sections. If $\rho_{th}=0$, the scheme is the same as the original NOMA. 

When the BS serves both users, the BS data symbols are superposition-coded as \cite{saito}:
\begin{equation}
    s=\sqrt{P_1}s_{1}+\sqrt{P_2}s_{2},
\end{equation}
where $s_1,s_2\sim \mathcal{CN}(0,1)$ are the data symbols for the users and $P_1,P_2$ are the power allocated to the users, respectively. The total BS transmit power is thus $P=P_1+P_2$.

Let $\bm{b}$ be the BS beamformer. Thus, the transmit vector from the BS is $\bm{b} s$ and the signal received by users is given by:
\begin{equation}
    y_{k}=\bm{b}^H \bm{g}_{k}s+n_{k},~~k=1,2,
\end{equation}
where $n_{k}\sim \mathcal{CN}(0,1)$ is received noise. Since $\beta_1\ge \beta_2$, User 1 has is the stronger user with a higher average channel gain. According to the principle of successive interference cancellation (SIC), User 2 decodes $s_2$ from $y_2$ treating $s_1$ as noise.  User 1 firstly decodes $s_2$ from $y_1$  treating $s_1$ as noise, then cancels the component of $s_2$ from $y_1$, and after that decodes $s_1$ without interference. The signal-to-interference-plus-noise-ratios (SINRs) for User $k$ to decode $s_l$ are thus given by:
\begin{eqnarray}
&&        {\rm SINR}_{1,s_1}=P_{1} \beta_{1} \left| \bm{b}^H \bm{h}_{1} \right|^2,
 \label{SINR_s1} \\
&&        {\rm SINR}_{k,s_2}=\frac{P_{2} \beta_{k} \left| \bm{b}^H \bm{h}_{k} \right|^2}{1+ P_{1} \beta_{1} \left| \bm{b}^H \bm{h}_{k} \right|^2},~k=1,2. \label{SINR_s2}
\end{eqnarray}

It is noteworthy that the decoding order of our CB-NOMA scheme depends on user path-loss coefficients only, equivalently, depends on the \textit{average} channel gains.  This is different from the NOMA schemes in \cite{zeng1} and \cite{zeng2}, where the decoding order depends on the \textit{instantaneous} channel gains, i.e., depends on both $\beta_k$'s and $\bh_k$'s. The latter requires extra communications between the BS and the users about the instantaneous CSI. 

\section{Average Minimal Transmit Power with SINR Guarantee}
In this section, the average minimal transmit power for the CB-NOMA scheme to guarantee QoS of both users are studied. Denote $\gamma_1$ and $\gamma_2$ as the SINR requirements for the users. Since $s_2$ needs to be decoded successfully by both users while $s_1$ only needs to be decoded by User $1$, to guarantee the QoS requirements, we need:
\setlength{\arraycolsep}{1pt}
\begin{eqnarray}
&&\min\left( {\rm SINR}_{1,s_2},{\rm SINR}_{2,s_2} \right) \geq \gamma_2, \label{requirement_gamma_2}\\
\text{and }&&    {\rm SINR}_{1,s_1}\geq \gamma_1 \label{requirement_gamma_1}
\end{eqnarray}

By using $P_1=P-P_2$ and \eqref{SINR_s2} into \eqref{requirement_gamma_2}, we have
\begin{equation}
    P_2 \geq \frac{\gamma_2}{1+\gamma_2} \left( \frac{1}{ \min\left( \beta_{1} \left| \bm{b}^H \bm{h}_{1} \right|^2,\beta_{2} \left| \bm{b}^H \bm{h}_{2} \right|^2 \right)}+P \right). \label{lower_bound_P_2_1} 
\end{equation}

By using \eqref{lower_bound_P_2_1} into \eqref{SINR_s1} and $P_1=P-P_2$, it can be shown that
\begin{equation}
    \begin{aligned}
        {\rm SINR}_{1,s_1}\le&\frac{1}{1+\gamma_2} P  \beta_{1} \left| \bm{b}^H \bm{h}_{1} \right|^2 -\\
        &\frac{ \gamma_2}{1+\gamma_2} \cdot \frac{ \beta_{1} \left| \bm{b}^H \bm{h}_{1} \right|^2}{  \min\left( \beta_{1} \left| \bm{b}^H \bm{h}_{1} \right|^2,\beta_{2} \left| \bm{b}^H \bm{h}_{2} \right|^2 \right)}. \label{SINR_1}
    \end{aligned}
\end{equation}
Finally, substituting \eqref{SINR_1} into \eqref{requirement_gamma_1}, we can obtain
\begin{equation}
    P \geq P_{\rm min} \triangleq\frac{\gamma_1 (1+ \gamma_2) }{ \beta_{1} \left| \bm{b}^H \bm{h}_{1} \right|^2} +\frac{\gamma_2 }{  \min\left( \beta_{1} \left| \bm{b}^H \bm{h}_{1} \right|^2,\beta_{2} \left| \bm{b}^H \bm{h}_{2} \right|^2 \right) }, \label{P_min_Arbitrary_Beamforming}
\end{equation}
with equality if and only if (\ref{lower_bound_P_2_1}) takes equality.
$P_{\rm min}$ defined in (\ref{P_min_Arbitrary_Beamforming}) is the achievable lower bound on the instantaneous transmit power to guarantee the SINR constraints for both users. The condition in \eqref{P_min_Arbitrary_Beamforming} is only a necessary condition since $\gamma_1$ and $\gamma_2$ may not be achieved even when $P \geq P_{\rm min}$ with a non-optimal power allocation. On the other hand, if the optimum power allocation is used, the condition is both necessary and sufficient, that is, $\gamma_1$ and $\gamma_2$ are guaranteed for the two users if and only if $P \ge P_{\rm min}$.

We consider the matched filter (MF) beamforming based on channel vector of the stronger user, i.e., 
\begin{equation}
    \bm{b}=\frac{\bm{h}_1}{\lVert \bm{h}_1 \rVert}. \label{precoder_MRT}
\end{equation}
The minimal transmit power can be rewritten as:
\begin{equation}
    P_{\rm min} =\frac{\gamma_1 (1+ \gamma_2) }{ \beta_1  \lVert \bm{h}_1 \rVert^2} +\frac{\gamma_2 }{ \min \left(\beta_1  \lVert \bm{h}_1 \rVert^2, \beta_2  \lVert \bm{h}_2 \rVert^2 \rho^2 \right) }. \label{P_min_MRT_Beamforming}
\end{equation}
The following theorem is proved for the average value of $P_{\min}$.

\begin{theorem}
Define
\begin{equation}
\tilde{P}_{lo}\triangleq
\frac{ \gamma_1 (1+\gamma_2)}{(M-1)\beta_1}+\frac{ \gamma_2}
{(M-1)\beta_2 \rho_{th}^2} F(1,1;M;1-\rho_{th}^{-2}), 
\label{EP_App_1} 
\end{equation}
where $F(\cdot,\cdot;\cdot;\cdot)$ is the hypergeometric function and $\rho_{th}$ is the CC threshold. The average minimal transmit power for CB-NOMA to guarantee SINR levels of both users, $\gamma_1$ and $\gamma_2$,  has the following lower and upper bounds:
\begin{eqnarray}
\tilde{P}_{lo}\le \Exp \left[P_{\rm min}\right]\le 
\tilde{P}_{lo}
\left(1+\min\left\{\frac{\beta_2}{\beta_1},\frac{\gamma_2}{\gamma_1}\right\} \right).
\label{Results-1}
\end{eqnarray}
\label{thm-1}
\end{theorem}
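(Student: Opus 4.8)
The plan is to exploit the explicit distributions of the random quantities in \eqref{P_min_MRT_Beamforming}. Since $\bm h_k \sim \mathcal{CN}(\bm 0,\bm I_M)$, the squared norm $\lVert \bm h_k\rVert^2$ is Gamma-distributed with shape $M$ and unit scale, so $\Exp[1/\lVert\bm h_k\rVert^2] = 1/(M-1)$. Writing $\bm h_2 = (\bm u^H\bm h_2)\bm u + \bm h_2^{\perp}$ with $\bm u = \bm h_1/\lVert\bm h_1\rVert$ shows that $\rho^2 = |\bm u^H\bm h_2|^2/\lVert\bm h_2\rVert^2$ has a $\mathrm{Beta}(1,M-1)$ law, with density $(M-1)(1-x)^{M-2}$ on $[0,1]$, and --- importantly --- that $\rho$ is independent of both $\lVert\bm h_1\rVert^2$ and $\lVert\bm h_2\rVert^2$. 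Because CB-NOMA transmits only when $\rho\ge\rho_{th}$, the expectation in the theorem is understood conditionally on this event; by independence the conditioning only reshapes the law of $\rho$ to the density $(M-1)(1-x)^{M-2}/(1-\rho_{th}^2)^{M-1}$ on $[\rho_{th}^2,1]$, leaving $\lVert\bm h_1\rVert^2$ and $\lVert\bm h_2\rVert^2$ untouched.

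The first summand of \eqref{P_min_MRT_Beamforming} is handled directly: $\Exp[\gamma_1(1+\gamma_2)/(\beta_1\lVert\bm h_1\rVert^2)] = \gamma_1(1+\gamma_2)/((M-1)\beta_1)$, which is the first term of $\tilde P_{lo}$. For the second summand I would sandwich the minimum. For the lower bound, use $\min(\beta_1\lVert\bm h_1\rVert^2,\beta_2\lVert\bm h_2\rVert^2\rho^2)\le\beta_2\lVert\bm h_2\rVert^2\rho^2$ together with the independence factorization to get $\Exp[\gamma_2/\min(\cdot)] \ge \gamma_2\,\Exp[1/(\beta_2\lVert\bm h_2\rVert^2)]\,\Exp[1/\rho^2\mid\rho\ge\rho_{th}]$. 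The crux is then the identity $\Exp[1/\rho^2\mid\rho\ge\rho_{th}] = \rho_{th}^{-2}F(1,1;M;1-\rho_{th}^{-2})$: the change of variables $x = \rho_{th}^2+(1-\rho_{th}^2)t$ turns $\int_{\rho_{th}^2}^1 x^{-1}(M-1)(1-x)^{M-2}(1-\rho_{th}^2)^{-(M-1)}\,dx$ into $\frac{M-1}{\rho_{th}^2}\int_0^1(1-t)^{M-2}\bigl(1-(1-\rho_{th}^{-2})t\bigr)^{-1}dt$, which is precisely the Euler integral representation of $\rho_{th}^{-2}F(1,1;M;1-\rho_{th}^{-2})$. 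This yields $\tilde P_{lo}\le\Exp[P_{\min}]$.

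For the upper bound I would instead use $1/\min(a,b)\le 1/a+1/b$, so that $\Exp[\gamma_2/\min(\cdot)] \le \gamma_2\,\Exp[1/(\beta_1\lVert\bm h_1\rVert^2)] + \gamma_2\,\Exp[1/(\beta_2\lVert\bm h_2\rVert^2\rho^2)\mid\rho\ge\rho_{th}]$. The second piece reproduces the $\rho$-dependent term of $\tilde P_{lo}$ exactly, while the first adds an excess term $\gamma_2/((M-1)\beta_1)$; it remains to check this excess is at most $\min\{\beta_2/\beta_1,\gamma_2/\gamma_1\}\,\tilde P_{lo}$. I would split into cases: if the minimum equals $\gamma_2/\gamma_1$, bound the excess by $\frac{\gamma_2}{\gamma_1}$ times the first term of $\tilde P_{lo}$ using $1+\gamma_2\ge1$; if it equals $\beta_2/\beta_1$, bound it by $\frac{\beta_2}{\beta_1}$ times the second term of $\tilde P_{lo}$ using $\rho_{th}^{-2}F(1,1;M;1-\rho_{th}^{-2}) = \Exp[1/\rho^2\mid\rho\ge\rho_{th}]\ge1$, which is immediate since $\rho^2\le1$. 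In either case the upper bound in \eqref{Results-1} follows.

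The main obstacle I expect is the closed-form evaluation of $\Exp[1/\rho^2\mid\rho\ge\rho_{th}]$: carrying out the substitution cleanly and matching the parameters $(a,b,c,z)=(1,1,M,1-\rho_{th}^{-2})$ against the ${}_2F_1$ integral representation, whose validity requires $M>1$. The remaining ingredients --- the two elementary bounds on $\min$, the independence factorization, and the Gamma moment $\Exp[1/\lVert\bm h_k\rVert^2]=1/(M-1)$ --- are routine.
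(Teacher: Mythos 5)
Your proposal is correct and follows essentially the same route as the paper: the lower bound drops the $\beta_1\lVert\bm h_1\rVert^2$ branch of the minimum and evaluates $\Exp[1/\rho^2\mid\rho\ge\rho_{th}]$ as $\rho_{th}^{-2}F(1,1;M;1-\rho_{th}^{-2})$, and the upper bound controls the excess term $\gamma_2/((M-1)\beta_1)$ by comparing it to each summand of $\tilde P_{lo}$ via $1+\gamma_2\ge 1$ and $\Exp[1/\rho^2\mid\rho\ge\rho_{th}]\ge 1$, exactly as in Appendix A. The only cosmetic difference is that you invoke the pointwise bound $1/\min(a,b)\le 1/a+1/b$ where the paper partitions the integration domain according to which argument attains the minimum and then enlarges each piece to the full domain, which amounts to the same inequality.
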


\begin{proof}
See Appendix \ref{app-A}.
\end{proof}

Theorem \ref{thm-1} provides a lower and an upper bound on the minimal average transit power. It is the foundation of our analysis on the average minimal transmit power. In what follows, we provide several corollaries based on Theorem \ref{thm-1} for more insights.

\begin{corollary} $\lim_{\rho_{th}\rightarrow 0} \Exp[P_{\min}]=\infty$ and $\Exp[P_{\min}]<\infty$ for any $\rho_{th}>0,\gamma_1>0,\gamma_2>0$.
\end{corollary}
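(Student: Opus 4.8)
The plan is to read off both assertions from the sandwich bounds \eqref{Results-1} of Theorem~\ref{thm-1}: once $\tilde{P}_{lo}$ is understood as a function of $\rho_{th}$, the lower bound forces the blow-up as $\rho_{th}\to0$ and the upper bound forces finiteness for $\rho_{th}>0$. Since the first summand of $\tilde{P}_{lo}$ in \eqref{EP_App_1} is a finite positive constant independent of $\rho_{th}$, everything comes down to the function
\begin{equation}
g(\rho_{th})\triangleq\frac{1}{\rho_{th}^{2}}\,F\!\left(1,1;M;1-\rho_{th}^{-2}\right),
\end{equation}
for which I want to prove $g(\rho_{th})<\infty$ for each fixed $\rho_{th}>0$ and $g(\rho_{th})\to\infty$ as $\rho_{th}\to0^{+}$.

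The tool I would use is the Euler integral representation of the Gauss hypergeometric function: for $M\ge2$ (the number of BS antennas) and for $z\le0$,
\begin{equation}
F(1,1;M;z)=(M-1)\int_{0}^{1}\frac{(1-t)^{M-2}}{1-zt}\,dt .
\end{equation}
Putting $z=1-\rho_{th}^{-2}$, which is $\le0$ whenever $\rho_{th}\le1$, and simplifying by $\rho_{th}^{2}(1-zt)=\rho_{th}^{2}+(1-\rho_{th}^{2})t$, one gets
\begin{equation}
g(\rho_{th})=(M-1)\int_{0}^{1}\frac{(1-t)^{M-2}}{\rho_{th}^{2}+(1-\rho_{th}^{2})t}\,dt .
\label{g-int}
\end{equation}
For any fixed $\rho_{th}\in(0,1]$ the denominator in \eqref{g-int} is bounded below by $\rho_{th}^{2}>0$, so the integral is finite; and for $\rho_{th}>1$ the scheme transmits with probability zero, so $\Exp[P_{\min}]=0$ trivially. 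Hence $\tilde{P}_{lo}<\infty$ for every $\rho_{th}>0$, and the upper bound in \eqref{Results-1} gives $\Exp[P_{\min}]<\infty$ for all $\gamma_1>0,\gamma_2>0$.

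For the divergence I would lower-bound \eqref{g-int} by discarding everything outside $t\in[\rho_{th},1/2]$: on that interval $\rho_{th}^{2}+(1-\rho_{th}^{2})t\le\rho_{th}^{2}+t\le2t$ and $(1-t)^{M-2}\ge2^{-(M-2)}$, so
\begin{equation}
g(\rho_{th})\ \ge\ \frac{M-1}{2^{M-1}}\int_{\rho_{th}}^{1/2}\frac{dt}{t}\ =\ \frac{M-1}{2^{M-1}}\,\ln\frac{1}{2\rho_{th}},
\end{equation}
which tends to $+\infty$ as $\rho_{th}\to0^{+}$; equivalently, Fatou's lemma applied to \eqref{g-int} does it, the pointwise limit of the integrand being the non-integrable $(1-t)^{M-2}/t$. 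Therefore $\tilde{P}_{lo}\to\infty$, and the lower bound $\Exp[P_{\min}]\ge\tilde{P}_{lo}$ in \eqref{Results-1} yields $\lim_{\rho_{th}\to0}\Exp[P_{\min}]=\infty$. A matching upper estimate for \eqref{g-int} follows in the same way, so this also confirms the $\ln(1/\rho_{th})$ growth rate announced in the abstract.

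The only genuine obstacle is controlling $F(1,1;M;1-\rho_{th}^{-2})$ as its argument tends to $-\infty$; everything else is a one-line appeal to Theorem~\ref{thm-1}. I expect the Euler integral form above to be the cleanest device --- preferable to quoting tabulated asymptotics of ${}_2F_1$ at infinity, which here carry an extra logarithmic factor precisely because the two numerator parameters coincide --- and it produces the logarithmic rate essentially for free.
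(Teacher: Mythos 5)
Your proof is correct, and it is essentially the paper's argument in a different costume. The paper gives no standalone proof of this corollary: it simply points at the alternative form of $\tilde{P}_{lo}$ derived in \eqref{LB_1}, namely
\begin{equation*}
\tilde{P}_{lo}=\frac{\gamma_1(1+\gamma_2)}{(M-1)\beta_1}+\frac{\gamma_2}{(1-\rho_{th}^2)^{M-1}\beta_2}\int_{\rho_{th}^2}^{1}\frac{(1-u)^{M-2}}{u}\,{\rm d}u,
\end{equation*}
whose integral is finite for $\rho_{th}>0$ and diverges like $-\ln\rho_{th}^2$ as $\rho_{th}\to0$ (made explicit in Appendix~B via the binomial expansion), and then invokes the sandwich \eqref{Results-1}. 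You instead start from the hypergeometric form \eqref{EP_App_1} and unpack $F(1,1;M;1-\rho_{th}^{-2})$ with Euler's integral representation; the substitution $u=\rho_{th}^2+(1-\rho_{th}^2)t$ maps your integral \eqref{g-int} exactly onto the paper's, so the two derivations coincide after a change of variables. What your route buys is independence from Appendix~A: you never need to know that \eqref{EP_App_1} arose from a Beta/Gamma integral, only the standard ${}_2F_1$ representation, and your explicit lower bound over $t\in[\rho_{th},1/2]$ is a cleaner certificate of divergence than the paper's term-by-term expansion. What the paper's route buys is that the integral form is already on the table (it is how $\tilde{P}_{lo}$ was obtained in the first place), so no hypergeometric identities are needed at all. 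Two trivia: the divergence statement only needs the \emph{lower} bound of \eqref{Results-1} together with $\tilde P_{lo}\to\infty$, which you correctly use; and your remark that $\Exp[P_{\min}]=0$ for $\rho_{th}>1$ is a harmless but slightly off side comment, since the expectation is conditioned on transmission and that event then has probability zero, making the quantity vacuous rather than zero.
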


This result means that if $\rho_{th}=0$, i.e., no constraint on the CC for using NOMA, any user SINR constraints cannot be guaranteed with finite average transmit power. This is problematic in energy efficiency.  \eqref{LB_1} shows that the unbounded average power is caused by the scenario when $\rho_{th}$ is in the vicinity of zero, i.e., the two channel vectors are close-to-orthogonal.  Naturally, a beam cannot serve the user with channel vector orthogonal to the beam, (which is the weaker user as the beamformer is based on the channel of the stronger user). Thus the SINR constraint of that user $\gamma_2>0$ can never be achieved. This motivates the proposed CB-NOMA scheme with a non-zero threshold on the channel correlation coefficient.       

\begin{corollary} 
When $\gamma_1 \gg \gamma_2$ or $\beta_1 \gg \beta_2$, the average minimal power of CB-NOMA can be tightly approximated as $\tilde{P}_{lo}$, i.e., $\Exp[P_{\min}]\approx \tilde{P}_{lo}$.
\label{cor-2}
\end{corollary}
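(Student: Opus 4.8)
The plan is to read the result straight off the two-sided bound in Theorem~\ref{thm-1}. From \eqref{Results-1} the ratio $\Exp[P_{\min}]/\tilde P_{lo}$ lies between $1$ and $1+\min\{\beta_2/\beta_1,\gamma_2/\gamma_1\}$, so the relative approximation error obeys
\[
0\;\le\;\frac{\Exp[P_{\min}]-\tilde P_{lo}}{\tilde P_{lo}}\;\le\;\min\left\{\frac{\beta_2}{\beta_1},\,\frac{\gamma_2}{\gamma_1}\right\}.
\]
The first thing I would check is that this division is legitimate, i.e.\ that $\tilde P_{lo}>0$: this is immediate from \eqref{EP_App_1}, since $\gamma_1,\gamma_2>0$, $\beta_1,\beta_2>0$, and the hypergeometric factor $F(1,1;M;1-\rho_{th}^{-2})$ is strictly positive for $0<\rho_{th}\le 1$ (its series has nonnegative terms, or one may invoke the integral representation of $F$).

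Then the argument simply splits into the two stated regimes. If $\gamma_1\gg\gamma_2$, the right-hand side above is at most $\gamma_2/\gamma_1\to 0$; if $\beta_1\gg\beta_2$, it is at most $\beta_2/\beta_1\to 0$. In either case the relative error vanishes, so $\Exp[P_{\min}]\to\tilde P_{lo}$, which is exactly the asserted tight approximation $\Exp[P_{\min}]\approx\tilde P_{lo}$. Because the bound is monotone in $\min\{\beta_2/\beta_1,\gamma_2/\gamma_1\}$, one also gets a quantitative statement for free: a ratio of, say, $0.1$ already forces the approximation to be accurate within $10\%$.

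There is essentially no technical obstacle here—the real work was done in establishing Theorem~\ref{thm-1}. The only point I would be careful to make explicit is the meaning of ``$\approx$'': it refers to the \emph{relative} error being controlled by $\min\{\beta_2/\beta_1,\gamma_2/\gamma_1\}$, and the significance of the corollary is that this quantity is small in precisely the two operating conditions that matter for NOMA pairing, namely a large SINR imbalance between the two users or a large path-loss imbalance (the canonical strong-user/weak-user pairing).
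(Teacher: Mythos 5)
Your proposal is correct and matches the paper's own (one-line) argument: the corollary is read directly off the two-sided bound \eqref{Results-1}, with the relative error controlled by $\min\{\beta_2/\beta_1,\gamma_2/\gamma_1\}$, which vanishes in either stated regime. Your added check that $\tilde P_{lo}>0$ and the explicit quantitative error bound are sensible elaborations but do not change the approach.
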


The result can be obtained directly from (\ref{Results-1}). The application scenario for NOMA is to serve both users at the same time-frequency block when one user has a significantly stronger channel \cite{saito} and a large difference on large scale fading coefficients, i.e.,  $\beta_1\gg \beta_2$ is beneficial to NOMA systems \cite{ding1}. Given the difference in the average channel gains, it is also reasonable to expect a significantly better service to the stronger user, i.e., $\gamma_1 \gg \gamma_2$. For these scenarios, Corollary \ref{cor-2} provides a tight approximation on the average minimal transmit power for any SINR constraints. 

The approximation $\tilde{P}_{lo}$ given in (\ref{EP_App_1}) shows the behaviour of the average minimal transmit  power with respect to the SINR constraints and other network parameters.
For example, $\tilde{P}_{lo}$ increases with $\gamma_1$ and $\gamma_2$ while decreases with $\rho_{th}$, $\beta_1$, $\beta_2$ and $M$. To further explore the behavior of $\Exp[P_{\min}]$ with respect to $M$ and $\rho_{th}$, we introduce the following asymptotic result. 

\begin{corollary} 
When $\gamma_1\gg\gamma_2$ or $\beta_1\gg \beta_2$, for any fixed $M$, when $\rho_{th}\rightarrow 0$, \label{col-3}
\begin{eqnarray} 
&&\hspace{-4mm}\Exp[P_{\min}]\approx \frac{ \gamma_1 (1+\gamma_2)}{(M-1)\beta_1}-
\frac{ \gamma_2}{\beta_2} \frac{\ln{\rho_{th}^2}\hspace{-0.5mm}+ \hspace{-0.5mm}\psi(M\hspace{-0.5mm}-\hspace{-0.5mm}1)
\hspace{-0.5mm}+\hspace{-0.5mm}C \hspace{-1mm}}{\left( 1-\rho^2_{th} \right)^{M-1}}, \label{P-min-small-rho}
\end{eqnarray}where $\psi(\cdot)$ is the di-gamma function and $C\approx 0.5772$ is the Euler-Mascheroni constant.
\label{lemma-1}
\end{corollary}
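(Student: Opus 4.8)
The plan is to route everything through Corollary~\ref{cor-2} and then reduce to the asymptotics of a single scalar integral. Under either hypothesis $\gamma_1\gg\gamma_2$ or $\beta_1\gg\beta_2$, Corollary~\ref{cor-2} gives $\Exp[P_{\min}]\approx\tilde P_{lo}$; and since the first summand of $\tilde P_{lo}$ in \eqref{EP_App_1} does not depend on $\rho_{th}$, the whole problem reduces to determining, as $\rho_{th}\to0$, the behaviour of
\[
G(\rho_{th})\triangleq\frac{1}{(M-1)\rho_{th}^{2}}\,F\bigl(1,1;M;1-\rho_{th}^{-2}\bigr),
\]
i.e.\ of the Gauss hypergeometric function as its argument tends to $-\infty$.

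The first step is to turn $G$ into an elementary integral. Because $1-\rho_{th}^{-2}<0$, Euler's integral representation applies and yields $\tfrac{1}{M-1}F(1,1;M;z)=\int_{0}^{1}(1-t)^{M-2}(1-zt)^{-1}\,dt$. Putting $z=1-\rho_{th}^{-2}$, absorbing the prefactor $\rho_{th}^{-2}$ into the integrand, and then substituting successively $u=1-t$ and $s=1-(1-\rho_{th}^{2})u$ transforms this into
\[
G(\rho_{th})=\frac{1}{(1-\rho_{th}^{2})^{M-1}}\int_{\rho_{th}^{2}}^{1}\frac{(1-s)^{M-2}}{s}\,ds ,
\]
which already produces the $(1-\rho_{th}^{2})^{-(M-1)}$ factor of \eqref{P-min-small-rho} (it comes jointly from the Jacobian of the last substitution and from the power $u^{M-2}$). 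It then remains to extract the logarithmic divergence of the integral: writing $(1-s)^{M-2}=[(1-s)^{M-2}-1]+1$, the constant term $1$ contributes $\int_{\rho_{th}^{2}}^{1}s^{-1}\,ds=-\ln\rho_{th}^{2}$, while the integrand $[(1-s)^{M-2}-1]/s$ is bounded near $s=0$, so its integral tends, as $\rho_{th}\to0$, to $\int_{0}^{1}\tfrac{(1-s)^{M-2}-1}{s}\,ds$; substituting $v=1-s$ and summing the finite geometric series gives $\int_{0}^{1}\tfrac{(1-s)^{M-2}-1}{s}\,ds=-\sum_{k=1}^{M-2}\tfrac1k=-(\psi(M-1)+C)$. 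Collecting these and substituting back into $\tilde P_{lo}$ gives \eqref{P-min-small-rho}.

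The argument is essentially mechanical; the two places calling for care are (i) choosing the second change of variables $s=1-(1-\rho_{th}^{2})u$ so that the $(1-\rho_{th}^{2})^{-(M-1)}$ factor appears in exactly the form stated---rather than the asymptotically equivalent but differently shaped expression one obtains from a Pfaff reduction of $F$, or from the $c=a+b$ connection formula for $F$ near $1$---and (ii) the bookkeeping identifying the non-logarithmic part with the harmonic number $H_{M-2}=\psi(M-1)+C$. All the manipulations need only $M\ge2$, consistent with the standing requirement that $\Exp[\|\bm{h}_1\|^{-2}]=1/(M-1)$ be finite.
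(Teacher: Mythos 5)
Your proof is correct and follows essentially the same route as the paper's: invoke Corollary~\ref{cor-2}, reduce to the small-$\rho_{th}$ asymptotics of $\int_{\rho_{th}^2}^1 (1-s)^{M-2}s^{-1}\,{\rm d}s$, and identify the constant with the harmonic number $H_{M-2}=\psi(M-1)+C$. The only differences are cosmetic: the paper simply quotes the integral form of $\tilde P_{lo}$ already established in \eqref{LB_1} (whereas you re-derive it from Euler's integral representation of $F(1,1;M;1-\rho_{th}^{-2})$), and it extracts the constant by binomially expanding $(1-s)^{M-2}$ and using $\sum_{k=1}^{M-2}\frac{(-1)^k}{k}\binom{M-2}{k}=-H_{M-2}$ rather than your splitting $(1-s)^{M-2}=[(1-s)^{M-2}-1]+1$; both yield the same $O(\rho_{th}^2)$ remainder.
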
 

\begin{proof}See Appendix \ref{app-B}. \end{proof}

(\ref{P-min-small-rho}) provides a closed-form expression for the average power when $\rho_{th}$ is close to zero. It shows that the average power to guarantee SINR constraints increases as $\ln{(1/\rho_{th}^2})$ for small threshold. 

Next, we consider massive MIMO where $M\gg 1$ and study the asymptotic behaviour with respect to the scaling of both $M$ and $\rho_{th}$ jointly. Both the average transmit power and the outage probability are considered to see the tradeoff. With the proposed CB-NOMA, the users are in outage if and only if the BS is silent, i.e., the absolute CC is smaller than the threshold, the outage probability is thus,
\begin{equation}
    P_{\rm out}=\mathbb{P} \left[ \rho<\rho_{th} \right]=1-\left( 1-\rho_{th}^2 \right)^{M-1}, \label{P_out}
\end{equation}
which is an increase function of $M$ and $\rho_{th}$. 

\eqref{P_out} shows that with a fixed $M$, the outage probability increases as $\rho_{th}$ increases; at the same time, the power consumption decreases. Thus we can adjust the balance between power consumption and outage performance via the design of $\rho_{th}$. Notice that with a fixed $\rho_{th}$ value, the outage probability increases as $M$ increases, though the power  consumption reduces. Thus for massive MIMO systems, a threshold design where $\rho_{th}$ decreases with $M$ is desirable. For this matter, we have the following results.

\begin{corollary}  When $M\rightarrow \infty$ and $\rho_{th}^2=\lambda/M^{\tau}$ for a constant $\lambda>0$, the following results on the average transmit power and the outage probability can be obtained:
\begin{itemize}
\item when $\tau>1$, $P_{\rm out}\rightarrow 0$ and $\tilde{P}_{lo}\rightarrow \infty$;
\item when $\tau<1$, $P_{\rm out}\rightarrow 1$ and $\tilde{P}_{lo}\rightarrow 0$;
\item when $\tau=1$, $P_{\rm out}\rightarrow 1-e^{-\lambda}$ and $\tilde{P}_{lo}\rightarrow \frac{\gamma_2}{\beta_2} e^{\lambda}  E_1(\lambda)$,
\end{itemize}
where $E_1(\cdot)$ is the exponential integral function.
\label{col-4}
\end{corollary}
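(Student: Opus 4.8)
The plan is to treat $P_{\rm out}$ by a direct limit and to reduce $\tilde P_{lo}$ to an elementary series whose limits can be read off from the exponential integral; no heavy machinery is needed. First I would dispatch the outage probability: from \eqref{P_out}, $P_{\rm out}=1-(1-\lambda/M^{\tau})^{M-1}$, and taking logarithms gives $(M-1)\ln(1-\lambda/M^{\tau})=-(M-1)\bigl(\lambda M^{-\tau}+O(M^{-2\tau})\bigr)$, which tends to $-\lambda$, $0$, and $-\infty$ for $\tau=1$, $\tau>1$, and $\tau<1$, respectively. Exponentiating, $(1-\rho_{th}^{2})^{M-1}\to e^{-\lambda},\,1,\,0$, which yields the three stated values of $P_{\rm out}$.

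The substantive part is the hypergeometric term in \eqref{EP_App_1}. Since $1-\rho_{th}^{-2}$ lies outside the unit disc for small $\rho_{th}$, the Gauss series cannot be used as is; instead I would apply a Pfaff transformation, $F(1,1;M;1-\rho_{th}^{-2})=\rho_{th}^{2}\,F(1,M-1;M;1-\rho_{th}^{2})$, whose argument now lies in $[0,1)$. Using $(M-1)_k/(M)_k=(M-1)/(M+k-1)$ the series collapses, giving the exact identity
\[
\frac{1}{(M-1)\rho_{th}^{2}}\,F\!\bigl(1,1;M;1-\rho_{th}^{-2}\bigr)=\sum_{k=0}^{\infty}\frac{(1-\rho_{th}^{2})^{k}}{M-1+k}
\]
(equivalently, this follows from Euler's integral representation of $F$). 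Writing $c\triangleq-\ln(1-\rho_{th}^{2})>0$, the summand $e^{-ck}/(M-1+k)$ samples a positive, decreasing function of $k$, so the integral test sandwiches the sum between $\int_{0}^{\infty}e^{-cx}/(M-1+x)\,dx$ and that integral plus $1/(M-1)$; substituting $y=M-1+x$ identifies the integral as $e^{(M-1)c}E_1\!\bigl((M-1)c\bigr)$. Hence
\[
\tilde P_{lo}=\frac{\gamma_1(1+\gamma_2)}{(M-1)\beta_1}+\frac{\gamma_2}{\beta_2}\,e^{(M-1)c}E_1\!\bigl((M-1)c\bigr)+O\!\bigl(\tfrac{1}{M}\bigr),
\]
so, since the first and last terms vanish as $M\to\infty$, everything reduces to the order of $(M-1)c=-(M-1)\ln(1-\lambda/M^{\tau})\sim\lambda M^{1-\tau}$.

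Finally I would distinguish the three regimes using only standard facts about $E_1$. For $\tau=1$, $(M-1)c\to\lambda$, so continuity of $E_1$ on $(0,\infty)$ gives $\tilde P_{lo}\to\frac{\gamma_2}{\beta_2}e^{\lambda}E_1(\lambda)$. For $\tau<1$, $(M-1)c\to\infty$, and the elementary bound $e^{z}E_1(z)<1/z$ forces $\tilde P_{lo}\to0$. For $\tau>1$, $(M-1)c\to0^{+}$, and from $E_1(z)=-C-\ln z+O(z)$ we get $e^{(M-1)c}E_1\!\bigl((M-1)c\bigr)\ge -\ln\bigl((M-1)c\bigr)-C\to\infty$, so $\tilde P_{lo}\to\infty$. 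The main obstacle is the single point at which a naive attempt fails: because the hypergeometric argument sits outside the unit disc, the series must be re-centred (via the Pfaff transformation, or Euler's integral) before it can be matched to $E_1$; once the identity above is in hand, the limits are routine and the small- and large-argument asymptotics of $E_1$ need only be quoted, not re-derived.
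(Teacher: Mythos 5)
Your proof is correct, and it takes a genuinely different route from the paper's. The paper never touches the hypergeometric form: it works from the integral representation $\tilde P_{lo}-T_0=\frac{\gamma_2}{\beta_2}(1-\rho_{th}^2)^{-(M-1)}\int_{\rho_{th}^2}^1 (1-\rho^2)^{M-2}\rho^{-2}\,{\rm d}\rho^2$ derived in Appendix A, substitutes $y=M\rho^2$, sandwiches $(1-y/M)^M$ between $(1-y^2/M^2)^M e^{-y}$ and $e^{-y/2}$, and invokes dominated convergence for $\tau=1$. You instead re-centre the series via the Pfaff transformation, collapse it to $\sum_{k\ge 0}(1-\rho_{th}^2)^k/(M-1+k)$, and identify that sum (up to an additive $O(1/M)$) with $e^{(M-1)c}E_1((M-1)c)$, after which all three regimes are read off from the standard small- and large-argument behaviour of $E_1$ evaluated at $(M-1)c\sim\lambda M^{1-\tau}$. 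What your route buys is a single closed-form expression for the finite-$M$ quantity from which every case follows uniformly; in particular it handles $\tau<1$ more robustly than the paper does. There the prefactor $(1-\rho_{th}^2)^{-(M-1)}\sim e^{\lambda M^{1-\tau}}$ diverges exponentially, and the paper's bound $I\le \frac{2}{\lambda}M^{\tau-1}$ (obtained from the weaker $e^{-y/2}$ majorant) does not by itself beat that prefactor --- one needs the sharper $E_1(a)\le e^{-a}/a$ decay, which is exactly what your inequality $e^zE_1(z)<1/z$ supplies. Two small points worth making explicit if you write this up: the summand $e^{-ck}/(M-1+k)$ is decreasing in $k$ because both factors are, which justifies the integral test; and the $O(1/M)$ additive error is harmless in all three regimes because the limits are respectively finite, zero, and $+\infty$.
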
 

\begin{proof}
See Appendix \ref{app-c}.
\end{proof}

Corollary 4 shows the limits of $\Exp[P_{\min}]$ and $P_{out}$. The two performance measures naturally compete with each other since a higher outage probability means less transmissions and less power consumption realized with a higher $\rho_{th}$. The most interesting threshold design is when $\tau=1$, meaning that the square of threshold decreases linearly in $M$, i.e., $\rho_{th}^2 =\lambda /M$ for a fixed $\lambda$. In this case, both $\Exp[P_{\min}]$ and $P_{out}$ have non-trivial bounded limits and by adjusting the value of $\lambda$, we can achieve a continuous tradeoff curve for the power consumption and outage probability. Another observation is that the limits are independent of $\gamma_1$ and $\beta_1$, the two parameters of the stronger user. The outage probability limit is also independent of the parameters  of the weaker user, while the average power consumption depends on $\gamma_2$ and $\beta_2$, meaning that the power consumption of CB-NOMA in a massive MIMO scenario is dominated by the weaker user.

\section{Numerical Result}
In this section, simulation results are demonstrated to show the performance of the proposed CB-NOMA scheme, as well as to verify the the accuracy of our analytical results. 
\begin{figure}
\centerline{\includegraphics[width=\columnwidth]{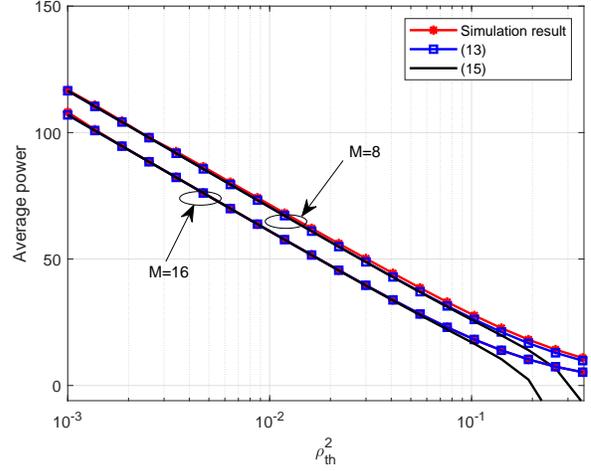}}
\caption{Average minimal power versus $\rho_{th}^2$ where $M=8$ and $16$, $\beta_1=0$dB, $\beta_2=-10$dB, $\gamma_1=10$dB, $\gamma_2=0$dB.}
\label{fig-1}
\end{figure} 
\begin{figure}
\centerline{\includegraphics[width=\columnwidth]{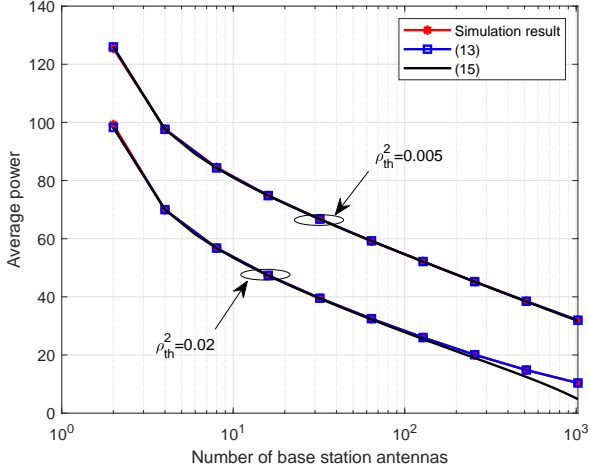}}
\caption{Average minimal power versus $M$ where $\rho_{th}^2=0.02$ and $0.005$, $\beta_1=0$dB, $\beta_2=-10$dB, $\gamma_1=10$dB, $\gamma_2=0$dB.}
\label{fig-2}
\end{figure}
\begin{figure}
\centerline{\includegraphics[width=\columnwidth]{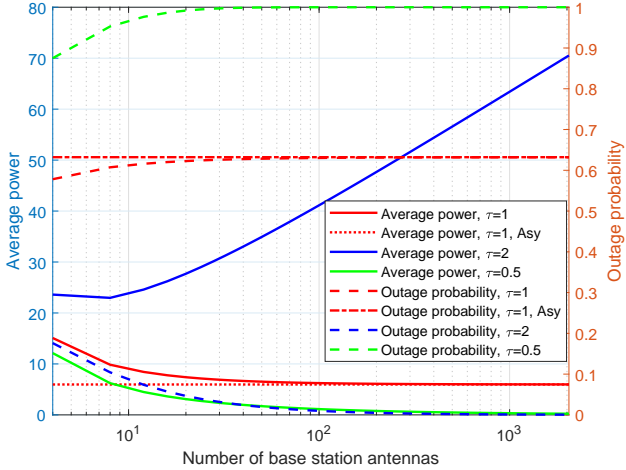}}
\caption{Average minimal power versus $M$ where $\rho_{th}^2=1/M^{\tau}$, $\beta_1=0$dB, $\beta_2=-10$dB, $\gamma_1=10$dB, $\gamma_2=0$dB}
\label{fig-3}
\end{figure}

In Fig.~\ref{fig-1}, the average minimal  power of the CB-NOMA scheme to guarantee the SINR requirements for the two users is shown as a function of $\rho_{th}^2$, where $\beta_1=0$dB, $\beta_2=-10$dB, $\gamma_1=10$dB, $\gamma_2=0$dB. It can be seen that the average minimal power decreases with $\rho_{th}^2$. Further, the figure also shows that $\tilde{P}_{lo}$ is an accurate approximation for all parameter values, while the result in (\ref{P-min-small-rho}) is tight for small $\rho_{th}^2$, e.g., when $\rho_{th}^2<0.1$. We can also see that the average minimal transmit power increases linearly in $\log (1/\rho_{th}^2)$ as $\rho_{th}^2$ approaches 0, which verifies the asymptotic behaviour in Corollary \ref{col-3}.  

Fig.~\ref{fig-2} depicts the average minimal power versus $M$ with the same values of $\gamma_1,\gamma_2,\beta_1,\beta_2$  as before. The accuracy of the approximations in Corollary (\ref{EP_App_1}) and (\ref{P-min-small-rho}) is verified again. The average minimal transmit power decreases as the number of antennas $M$ increases.

Fig.~\ref{fig-3} shows the average minimal power and outage probability versus $M$ where $\rho_{th}^2=1/M^{\tau}$, while other parameter values are the same as before. When $\tau=1$, the average minimal power decreases with $M$ and converges to a positive constant, which matches perfectly with the asymptotic result; when $\tau=0.5<1$, it decreases with $M$ and approaches to $0$; when $\tau=2>1$, it increases unbounded with $M$, which validate the results in Corollary \ref{col-4}. And the behavior of the outage probability also matches the results in Corollary \ref{col-4}.

\section{Conclusion}

A modified NOMA scheme based on the channel correlation coefficient is proposed in this work for systems with multiple-antennas at the BS and two single-antenna users. We derived the average minimal transmit power with QoS guarantee for both users and proved that  a positive channel correlation threshold is required for finite average transmit power. The behaviour of the  average transmit power with respect to the threshold and the BS antenna number is shown. Moreover, to balance the outage probability and average transmit power in massive MIMO systems, we proposed to design the threshold as a decreasing function with the number of BS antennas. The  scaling laws of the outage probability and average minimal power as well as their tradeoff law are derived based on the threshold design.

\begin{appendices}
\section{Proof of Theorem \ref{thm-1}}
\label{app-A}
Since $\bm{h}_1$ and $\bm{h}_2$ are independent each following $\mathcal{CN}(\b0,\bI)$, $\lVert \bm{h}_1 \rVert^2$ and $\lVert \bm{h}_2 \rVert^2$ are independent following the Gamma distribution with shape parameter $M$ and scale parameter 1; $\rho^2$ follows the Beta distribution with parameters 1 and $M-1$ \cite{Jindal}; and $\|\bm{h}_1\|^2,\|\bm{h}_2\|^2,\rho^2$ are mutually independent.

From \eqref{P_min_MRT_Beamforming} we can find the following lower bound of $P_{\rm min}$:
\begin{equation}
    P_{\rm min} \geq P_{lo}=\frac{\gamma_1 (1+ \gamma_2) }{ \beta_1  \lVert \bm{h}_1 \rVert^2} +\frac{\gamma_2 }{ \beta_2  \lVert \bm{h}_2 \rVert^2 \rho^2  }. \label{P_min_MRT_Beamforming_2}
\end{equation}
Thus, by using the probability density functions (PDFs) of Gamma distribution and Beta distribution, the following can be derived:
\begin{align}
    &\mathbb{E}\left[ P_{\rm min} \right]\geq\iiint_V f_{\lVert \bm{h}_1 \rVert^2}(x)    f_{\lVert \bm{h}_2 \rVert^2}(y)   f_{\rho^2}(z ) P_{{\rm min,lo}} {\rm d} x {\rm d} y{\rm d} z \notag\\
    =& \underbrace{\frac{ \gamma_1 (1+\gamma_2)}{(M-1)\beta_1}}_{T_0}+\underbrace{\iiint_{V}  f_{\lVert \bm{h}_2 \rVert^2}(y)   f_{\rho^2}(z ) \frac{\gamma_2 }{ \beta_2  y z  } {\rm d} x {\rm d} y{\rm d} z}_{T} \label{LB_0}\\
    =&T_0+\frac{ \gamma_2}{\left( 1-\rho^2_{th} \right)^{M-1}  \beta_2} \int_{\rho^2_{th}}^1 \frac{\left( 1-\rho^2 \right)^{M-2}}{\rho^2}{\rm d} \rho^2=\tilde{P}_{lo},\label{LB_1}
\end{align}
where $V=\left\{ (x, y, z)| x\in (0,\infty), y\in(0,\infty),z\in\left[\rho_{th}^2,1\right] \right\}$,  $f_X(\cdot)$ represents the PDF, and the last step can be obtained from the definition of hypergeometric function with some mathematical manipulations.

Next, we show the upper bound. First, define $V_1\triangleq\{ (x, y, z)| \beta_1 x < \beta_2 y z \}$ and $V_2\triangleq\{ (x, y, z)|\beta_1 x \geq \beta_2 y z \}$. By noticing that $V=V_1 \cup V_2$, we have from \eqref{LB_0},
\begin{equation}
    \begin{aligned}
        \mathbb{E}\left[ P_{\rm min} \right]
        =&T_0+\underbrace{\iiint_{V_1}f_{\lVert \bm{h}_1 \rVert^2}(x) \frac{\gamma_2 }{ \beta_1  x }{\rm d} x {\rm d} y{\rm d} z}_{T_1}+\\
        &\underbrace{\iiint_{V_2}  f_{\lVert \bm{h}_2 \rVert^2}(y)   f_{\rho^2}(z ) \frac{\gamma_2 }{ \beta_2  y z  } {\rm d} x {\rm d} y{\rm d} z}_{T_2}.\label{average_P_min_2}
    \end{aligned}
\end{equation}
Since $V_2 \subseteq V$, we have $T_2\le T$ and thus:
\begin{equation}
    T_0+T_2 \leq T_0+T=\tilde{P}_{lo}. \label{Ineq_1}
\end{equation}
For $T_1$, we have
\setlength{\arraycolsep}{1pt}
\begin{eqnarray}
        T_1&\leq & \frac{\gamma_2}{\gamma_1} \iiint_{V}f_{\lVert \bm{h}_1 \rVert^2}(x) \frac{\gamma_1 }{ \beta_1  x }{\rm d} x {\rm d} y{\rm d} z \nonumber\\
        \leq &&\hspace{-3mm} \frac{\gamma_2}{\gamma_1} \iiint_{V}f_{\lVert \bm{h}_1 \rVert^2}(x) \frac{\gamma_1 (1+\gamma_2)}{ \beta_1  x }{\rm d} x {\rm d} y{\rm d} z=\frac{\gamma_2}{ \gamma_1 } T_0, \label{UB_1}
\end{eqnarray}
and
\begin{eqnarray}
        T_1&\leq &  \frac{\beta_2}{\beta_1} \iiint_{V}f_{\lVert \bm{h}_2 \rVert^2}(y)  \frac{\gamma_2 }{ \beta_2  y  }{\rm d} x {\rm d} y{\rm d} z \nonumber \\
        \leq &&\hspace{-3mm}\frac{\beta_2}{\beta_1}\iiint_{V}  f_{\lVert \bm{h}_2 \rVert^2}(y)   f_{\rho^2}(z ) \frac{\gamma_2 }{ \beta_2  y z  } {\rm d} x {\rm d} y{\rm d} z=\frac{\beta_2}{\beta_1} T. \label{UB_2}
\end{eqnarray}
From \eqref{UB_1} and \eqref{UB_2} we can obtain:
\begin{equation}
    T_1 \leq \min\left\{ \frac{\beta_2}{\beta_1},\frac{\gamma_2}{\gamma_1} \right\} \left( T_0+T \right)=\min\left\{ \frac{\beta_2}{\beta_1},\frac{\gamma_2}{\gamma_1} \right\} \tilde{P}_{lo}. \label{Ineq_2}
\end{equation}

By combining \eqref{average_P_min_2}, \eqref{Ineq_1} and \eqref{Ineq_2}, the upper bound of $\mathbb{E} \left[ P_{\rm min} \right]$ in \eqref{Results-1} is proved.

\section{Proof of Corollary \ref{lemma-1}}
\label{app-B}
When $\gamma_1\gg\gamma_2$ or $\beta_1\gg \beta_2$, we have from Corollary 2 $\Exp[P_{\min}]\approx \tilde{P}_{lo}$ as defined in \eqref{EP_App_1}. In Appendix A, an alternative form of $\tilde{P}_{lo}$ is derived in \eqref{LB_1}. The integral in \eqref{LB_1} can be further calculated as follows: 
\begin{equation*}
\begin{aligned}
&\int_{\rho_{th}^2}^1 \frac{\left( 1-\rho^2 \right)^{M-2}}{\rho^2}{\rm d}\rho^2\\ = & \sum_{k=0}^{M-2}\left( -1 \right)^{k}\int_{\rho_{th}^2}^1 \dbinom{M-2}{k} \rho^{2(k-1)}d\rho^2\\
=& \sum_{k=1}^{M-2}\frac{\left( -1 \right)^{k}}{k}\dbinom{M-2}{k} \left(1-\rho_{th}^{2k}\right)
-\ln{\rho_{th}^2}\\
=&-\psi(M-1)-C -\ln{\rho_{th}^2}+\mathcal{O}(\rho_{th}^2). 
\end{aligned}
\end{equation*}
When $\rho_{th}^2 \to 0$, by ignoring the higher order terms of $\rho^2_{th}$, \eqref{P-min-small-rho} is obtained. 

\section{Proof of Corollary \ref{col-4}}
\label{app-c}

The limits of $P_{out}$ in Corollary \ref{col-4} can be obtained by:
\begin{equation}
    \begin{aligned}
        \lim_{M \to \infty} \left( 1-\frac{\lambda}{M^\tau} \right)^{M-1}
        =&
        \begin{cases}
        0 &,\tau<1\\
        e^{-\lambda}~~&,\tau=1\\
        1 &,\tau>1.
        \end{cases} \label{P_limit}
    \end{aligned}
\end{equation}

For the average transmit power, we consider the alternative form of $\tilde{P}_{lo}$ given by \eqref{LB_1}. When $\rho_{th}^2=\frac{\lambda}{M^\tau}$, the limit of $\tilde{P}_{lo}$ is given by:
\begin{equation}
    \lim_{M \to \infty} \tilde{P}_{lo}= \lim_{M \to \infty} \left[  \frac{ \gamma_2}{\left( 1-\frac{\lambda}{M^\tau} \right)^{M-1}  \beta_2} I \right], \label{P_lo_limit}
\end{equation}
where
\begin{equation}
\begin{aligned}
    I\triangleq\int_{\frac{\lambda}{M^\tau}}^1 \frac{\left( 1-\rho^2 \right)^{M}}{\rho^2}{\rm d} \rho^2\xlongequal[]{y=M \rho^2} &\int_{\lambda M^{1-\tau}}^M \frac{\left( 1-\frac{y}{M} \right)^{M}}{y}{\rm d} y.
    \label{Integral_with_threshold}
\end{aligned}
\end{equation}

Now notice that $(1+\frac{y}{M})^M<e^y<(1+\frac{y}{M})^{M+1}$ for any $M\ge 1$, thus
\begin{equation}
\begin{aligned}
    1>
   (1-\frac{y^2}{M^2})^M=&(1-\frac{y}{M})^M[(1+\frac{y}{M})^{M+1}]^{\frac{M}{M+1}}\\
    >& 
    \left(1-\frac{y}{M}\right)^M e^{\frac{M}{M+1}y},
\end{aligned}
\end{equation}
\begin{equation}
    (1-\frac{y^2}{M^2})^M =(1-\frac{y}{M})^M(1+\frac{y}{M})^{M}    
    < 
    (1-\frac{y}{M})^M e^y.
\end{equation}
These gives, for all $M\geqslant 1$, 
\begin{equation}{\label{eq:202001072152}}
    \frac{(1-\frac{y^2}{M^2})^M}{y}e^{-y}
    \leqslant
    \frac{(1-\frac{y}{M})^M}{y}
    \leqslant
    \frac{1}{y}e^{-\frac{y}{2}}.
\end{equation}
Consequently, when $\tau<1$, we have, as $M\to\infty$, 
\begin{equation}
    I\leqslant \int_{\lambda M^{1-r}}^M 
    \frac{1}{y} e^{-\frac{y}{2}}
    {\rm d}y
    \leqslant 
    \frac{1}{\lambda} M^{r-1} 
    \int_0^\infty e^{-\frac{y}{2}}
    {\rm d}y
    \longrightarrow 0. 
\end{equation}
When $\tau>1$, 
\begin{equation}
    I\geqslant \int_{\lambda M^{1-r}}^1 
    \frac{(1-\frac{1}{M^2})^M}{y}e^{-y} 
    {\rm d}y
    \longrightarrow \infty. 
\end{equation}
Finally when $\tau=1$, by (\ref{eq:202001072152}), and $(1-\frac{y}{M})^M\longrightarrow e^{-y}$ for every $y$, we have 
\begin{equation}
    I\longrightarrow \int_\lambda^\infty 
    \frac{e^{-y}}{y}{\rm d}y
\end{equation}
through Lebesgue's dominated convergence theorem. Thus ends the proof.

\end{appendices}

\end{document}